\newtheorem{theorem}{Theorem}
\newtheorem{definition}{Definition}
\newenvironment{proof}
{\textit{Proof:} }
{$\square$}
\DeclareMathOperator*{\argmax}{arg\,max}
\title{Towards a power analysis for PLS-based methods}
\author[1,*]{Angela Andreella}
\affil[1]{\small Department of Economics, Ca' Foscari University of Venice, Venice, Italy}
\author[3]{Livio Finos}
\author[3]{Bruno Scarpa}
\author[2]{Matteo Stocchero}
\date{}
\affil[2]{\small Department of Women's and Children's Health, University of Padova, Padova, Italy}
\affil[3]{\small Department of Statistical Sciences, University of Padova, Padova, Italy}
\affil[*]{\small Corresponding author: angela.andreella@unive.it}
\providecommand{\keywords}[1]{\textbf{Keywords:} #1}
\begin{document}
\maketitle
\sloppy 

\begin{abstract}
In recent years, power analysis has become widely used in applied sciences, with the increasing importance of the replicability issue. When distribution-free methods, such as Partial Least Squares (PLS)-based approaches, are considered, formulating power analysis turns out to be challenging. In this study, we introduce the methodological framework of a new procedure for performing power analysis when PLS-based methods are used. Data are simulated by the Monte Carlo method, assuming the null hypothesis of no effect is false and exploiting the latent structure estimated by PLS in the pilot data. In this way, the complex correlation data structure is explicitly considered in power analysis and sample size estimation. The paper offers insights into selecting statistical tests for the power analysis procedure, comparing accuracy-based tests and those based on continuous parameters estimated by PLS. Simulated and real datasets are investigated to show how the method works in practice.
\end{abstract}
\keywords{Classification, Partial Least Squares, Permutation Tests, Power Analysis}

\section{Introduction}

Nowadays, scientists are increasingly forced by ethical and economic considerations to apply power analysis for sample size estimation since inferential statistics can only lead to robust and reliable results by implementing the right experimental design. Indeed, the number of observations should not be too large for efficiency, ethical, and cost reasons but enough to guarantee reliable statistical results with minimal false positive or false negative rates. Several authors highlighted that the results of many published biomedical studies are unreliable and probably false due to the small sample size used \cite{Button2013, Ioannidis2005}, and most of the published -omics studies are underpowered, as declared by the authors themselves. 

In particular, when analyzing multivariate data, the responses are typically correlated, redundant, and noisy, and the number of observations is generally smaller than the number of responses. In this framework, likelihood-free approaches such as Partial Least Squares (PLS)-based methods \cite{Wold1983,wold2001pls} are applied, and traditional techniques for power analysis cannot be applied to sample size estimation. In addition, methods such as PLS (as well as canonical correlation analysis (CCA) \cite{jordan1875essai, hotelling1992relations} and principal component analysis (PCA) \cite{pearson1901liii, jolliffe2002principal}) are generally viewed as exploratory methods rather than as testing procedures \cite{winkler2020permutation}. This means that the model parameters are uncommonly interpreted as statistical effects, and no single definition for the effect size is available. For these reasons, we propose here a methodology for performing power analysis when PLS-based methods are used as tools for data analysis. 

Since PLS-based methods are not based on statistical distributions and are not likelihood estimation-based techniques, strategies for power analysis and sample size estimation should be then based on numerical simulation. One method common in the literature is using Monte Carlo (MC) simulation \cite{Martens2000,muthen2002use} that generates a large set of artificial data using the design under evaluation to design hypothetical experiments. The artificial data are analyzed in the same way as the real dataset, obtaining the distributions of the model parameters of interest. Then, the optimal sample size is estimated by studying the cost and risks of Type I and Type II errors associated with the given experimental design.


To the best of our knowledge, only a few studies investigated the problem of sample size estimation in the case of PLS-based techniques. Blaise and coauthors \cite{Blaise2016} introduced an interesting approach based on MC simulation where the correlation between variables is explicitly incorporated. Specifically, new samples with marginal distributions and correlation structure similar to the ones observed in the pilot data are simulated, modeling the log-transformed pilot data as a multivariate normal distribution. Thus, the effect size is introduced by acting on subsets of correlated variables. The relationship between statistical power, sample size, and effect size is investigated by analyzing the artificial data and obtaining the distributions of the statistics of interest. In a second study, Saccenti and Timmerman \cite{Saccenti2016} drew a possible line of thought to perform power analysis for PCA and PLS for Discriminant Analysis (PLS-DA). Important concepts have been discussed, and some interesting ideas have been offered for sample size estimation in a multivariate setting. Specifically, in the case of PCA, they proved that inference and sample size estimation could be grounded by solid statistical characterization of the distributional properties of the PCA solution, while for PLS-DA, the scenario is more complex. In another study, Nyamundanda and coauthors \cite{Nyamundanda2013} proposed a general method for sample size estimation based on simulated data generated from Probabilistic Principal Component Analysis (PPCA)-based models that can be applied without experimental pilot data. The approach considers only univariate data analysis controlling false discovery rate as a data analysis strategy and does not examine PLS-based methods.

In this manuscript, we present a new method for power analysis that uses the score structure discovered by PLS from the pilot data to simulate new datasets with the same covariance structure of the pilot data but different sample sizes. This is one of the main novelties of our study. Indeed, unlike the above-mentioned published approaches, data simulation used in power analysis is performed here to see the data from a multivariate point of view, respecting the correlation structure. The same strategy PLS uses to model the data is then applied in the data simulation process. This makes the approach tailored explicitly for PLS. Another novelty with respect to the above-mentioned published methods is that different statistical tests for the power analysis procedure are proposed and investigated. Moreover, the approach is developed as \texttt{R} package to ensure accessibility for researchers, enhancing transparency and replicability of the results. 

For the sake of simplicity, we consider the simple case of a case-control study that is investigated employing PLS for classification. The reason for focusing our attention on this simple case is twofold. First, the case-control setting is one of the most largely implemented study designs in -omics sciences and beyond. Second, PLS for classification in the limit case of a $2$-class problem is a simple but not trivial example of a PLS-based technique that can be used both to show how our new approach can be formulated and to discuss the use of different statistical tests. Regarding the statistical tests, three statistical tests are considered to analyze the null hypothesis of equal distribution between the two classes. The first statistical test is based on the Matthew Correlation Coefficient (MCC), widely used in the PLS analysis framework. Since this statistical test loses power due to its discretization nature \cite{rosenblatt2021better}, we propose a two-sample t-test based on the predictive score matrix estimated by PLS and the squared Pearson correlation coefficient properly transforming the categorical dependent variable.

The proposed method can be extended to other PLS-based techniques following the same lines of thought. 

The paper is structured as follows. Section \ref{background} summarizes the theory underlying PLS to provide a background to understand the following sections. Section \ref{mcc} defines the permutation-based statistical tests to analyze the null hypothesis of equal distribution between the two classes. Section \ref{power} shows the strategy for power analysis and sample size estimation based on MC simulation. Section \ref{applications} analyzes simulated and real datasets to show how the method works in practice. Discussion and concluding remarks are reported in Section \ref{discussion}.

Without loss of generality, we consider that the data are mean-centered throughout the manuscript unless stated otherwise.

\section{Background}\label{background}
The theoretical framework of PLS for regression (PLSR) \cite{Wold1983} and PLS for classification (PLSc) \cite{stocchero2021pls} is briefly presented to give the reader the helpful background to understand the procedure for power analysis and sample size estimation introduced in Sections \ref{mcc} and \ref{power}.

\subsection{PLS for regression}\label{pls}

Let $\mathbf{Y} \in \mathbb{R}^{N \times K}$ be the matrix of dependent variables and $\mathbf{X} \in \mathbb{R}^{N \times P}$ the matrix of covariates, where $N$ is the number of observations, $K$ the number of dependent variables, and $P$ the number of independent ones. Considering the linear regression model
\begin{equation*}
    \mathbf{Y} = \mathbf{X} \mathbf{B} + \mathbf{F}
\end{equation*}
where $\mathbf{F} \sim (0, \mathbf{\Sigma})$ is the error term matrix and $\mathbf{\Sigma}$ the corresponding covariance matrix, PLSR estimates the coefficient matrix $\mathbf{B}$ decomposing the $\mathbf{X}$ and $\mathbf{Y}$ matrices by means of the scores matrix $\mathbf{T}$, as defined in the following definition.

\begin{definition}\label{bifact}
Let be $\mathbf{Y} \in \mathbb{R}^{N \times K}$ and $\mathbf{X} \in \mathbb{R}^{N \times P}$. PLSR can be rephrased in terms of the score matrix $\mathbf{T}$ as:
\begin{align*}
    \mathbf{X} &= \mathbf{T} \mathbf{P}^\top + \mathbf{E} \\
    \mathbf{Y} &= \mathbf{T} \mathbf{Q}^\top + \mathbf{F}
\end{align*}
where $\mathbf{E}$, $\mathbf{F}$ are error terms matrices, and $\mathbf{P} = \mathbf{X}^\top \mathbf{T}(\mathbf{T}^\top \mathbf{T})^{-1} $ and $\mathbf{Q} = \mathbf{Y}^\top \mathbf{T}(\mathbf{T}^\top \mathbf{T})^{-1} $ the loadings matrices of $\mathbf{X}$ an $\mathbf{Y}$, respectively. 
\end{definition}

The scores matrix is calculated by an iterative procedure, the so-called PLS2 algorithm. At each iteration $a \in \{1, \dots, A\}$ with $A \in \{1, \dots, \text{rank}(\mathbf{X})\}$ of the algorithm, a suitable weight vector $\mathbf{w}_a$ is calculated as solution of the following eigenvalue problem
\begin{equation*}
    \hat{\mathbf{E}}_{a-1}^\top \hat{\mathbf{F}}_{a-1} \hat{\mathbf{F}}^\top_{a-1} \hat{\mathbf{E}}_{a-1} \mathbf{w}_a = \lambda_a \mathbf{w}_a 
\end{equation*}
where $\hat{\mathbf{E}}_{a-1}$ and $\hat{\mathbf{F}}_{a-1}$ are the residual matrices calculated in the previous iteration $a-1$ and $\lambda_a$ is the eigenvalue associated with $\mathbf{w}_a$. The weight vector is then used to project the residual matrix of $\mathbf{X}$ to obtain
\begin{equation*}
\mathbf{T} = [\mathbf{t}_a] = [ \hat{\mathbf{E}}_{a-1} \mathbf{w}_a ] \in \mathbb{R}^{N \times A}.
\end{equation*}
We have used $[\cdot]$ to stand the generic column vector of the referred matrix.

At the first iteration, $\hat{\mathbf{E}}_0 \coloneqq \mathbf{X}$ and $\hat{\mathbf{F}}_0 \coloneqq \mathbf{Y}$ whereas, after $A$ iterations, the final residual matrices are $\hat{\mathbf{E}}\coloneqq \hat{\mathbf{E}}_A$ and $\hat{\mathbf{F}}\coloneqq \hat{\mathbf{F}}_A$. The complete PLS2 algorithm is reported in Appendix \ref{appendix:AppendixA}.

It is worth noting that $\mathbf{T}$ is a linear combination of the columns of $\mathbf{X}$ and that the complexity of the model depends on the number $A$ of iterations, which is then extremely important in PLS. Indeed, given the number $A$ of iterations, the PLS model is completely defined. Moreover, the number $A$ is also the rank of the score matrix $\mathbf{T}$ and of the weight matrix $\mathbf{W}$, having both orthogonal columns \cite{Hoskuldsson1988,stocchero2019iterative}.

After $A$ iterations, the matrix of the regression coefficients $\mathbf{B}$ is then estimated as $\hat{\mathbf{B}} = \mathbf{W}(\mathbf{W}^\top \mathbf{X}^\top \mathbf{X} \mathbf{W})^{-1} \mathbf{W}^\top \mathbf{X}^\top \mathbf{Y}$ and $\mathbf{W} = [\mathbf{w}_a] \in \mathbb{R}^{P \times A}$. In the case of a full column rank matrix of the covariates, it can be shown that $\hat{\mathbf{B}}$ is a biased estimator of $\mathbf{B}$ since $\mathbb{E}(\hat{\mathbf{B}}) = \mathbf{W} (\mathbf{W}^\top \mathbf{X}^\top \mathbf{X} \mathbf{W})^{-1} \mathbf{W}^\top \mathbf{X}^\top \mathbf{X} \hat{\mathbf{B}}_{\text{OLS}}$ where $\hat{\mathbf{B}}_{\text{OLS}}$ is the Ordinary Least Squares (OLS) estimator. In addition, $\hat{\mathbf{B}} = \hat{\mathbf{B}}_{\text{OLS}}$ if the number of score components equals the number of covariates (i.e., $A = P$). In the general case of a rank-deficient matrix of the covariates, when the maximum number of iterations is performed, the matrix of the regression coefficients is $\hat{\mathbf{B}} = \mathbf{V}_{\mathbf{X}} \mathbf{S}^{-1}_{\mathbf{X}} \mathbf{U}^\top_{\mathbf{X}} \mathbf{Y}$, where the Singular Value Decomposition (SVD) $\mathbf{X} = \mathbf{U}_{\mathbf{X}} \mathbf{S}_{\mathbf{X}} \mathbf{V}_{\mathbf{X}}$ has been considered and solves the least squares problem. Pulling $\hat{\mathbf{B}}$ away from the least squares solution helps balance the trade-off bias-variance, leading to approximations that better predict new observations.

The columns of $\mathbf{T}$ can be used as coordinates to represent the observations in a space with dimension $A$. Since $A\ll P$ in most cases, PLSR produces an efficient data reduction that simplifies the investigation of the data variation of $\mathbf{X}$ explaining $\mathbf{Y}$. Unfortunately, $A$ is often greater than $\text{rank}(\mathbf{Y})$, and then, the matrix factorization of $\mathbf{Y}$ of Defition \ref{bifact} becomes sub-optimal, i.e.,  a higher number of dimensions is being used to represent the data than is actually necessary. To overcome this sub-optimality, \cite{stocchero2016post} proposed an alternative matrix factorization by applying a post-transformation procedure.

Post-transformation is a procedure that, starting from a PLS model, generates a new PLS model where the score space is partitioned into two orthogonal sub-spaces. The first one is the predictive subspace spanned by the predictive score matrix called $\mathbf{T}_{P}$ correlated to the dependent variables. The second one is the non-predictive subspace described by the non-predictive score matrix called $\mathbf{T}_{O}$ orthogonal to $\mathbf{Y}$. So, the number of predictive score vectors equals $\text{rank}(\mathbf{Y})$, and the data variation explaining the dependent variables is included exclusively in the predictive part of the model. Post-transforming the PLSR model returns the following matrix factorization:
\begin{align*}
    \mathbf{X} &= \mathbf{T}_{P} \mathbf{P}_{P}^\top+ \mathbf{P}_{O} \mathbf{T}_{O}^\top+ \mathbf{E}\\
    \mathbf{Y} &=  \mathbf{T}_{P} \mathbf{Q}_{P}^\top+ \mathbf{F}
\end{align*}
where $\mathbf{P}_{P}$ and $\mathbf{P}_{O}$ are respectively the predictive and orthogonal loadings matrices of $\mathbf{X}$, and $\mathbf{Q}_{P}$ the predictive loading matrix of $\mathbf{Y}$. Interestingly, $\hat{\mathbf{E}}$ and $\hat{\mathbf{F}}$, and $\mathbf{\hat{B}}$ are the same of the original PLSR model. The set of predictive scores can then be used as bases to build the latent variables explaining the dependent variables. More details about the post-transformation procedure can be found in Appendix \ref{appendix:AppendixB}.

Most of the techniques of the PLS family can be generated from the PLS2 algorithm used to perform PLSR by modifying the equation for the weight calculation (e.g., including constraints the orthogonally Constrained PLS version is obtained \cite{Stocchero2018constraints}), or introducing suitable dependent variables (e.g., coding the categorical variable with dummy variables, PLS-discriminant analysis \cite{barker2003partial} is obtained). Moreover, all those techniques that are based on the Iterative Deflation Algorithm (IDA) (see Appendix \ref{appendix:AppendixA}) can be post-transformed \cite{stocchero2019iterative}.

\subsection{PLS for classification}\label{plsc}

PLS for classification (PLSc) \cite{stocchero2021pls} is an adaptation of PLSR when the support of $\mathbf{Y}$ equals $\mathcal{Y} = \{1, \dots, G\}$ (i.e., a $G$-class problem with $G$ number of classes). For the sake of simplicity, we consider here the case of a $2$-class problem. However, the approach can be easily extended to the case of $G >2$. 

Given $N_1$ observations of class $1$ and $N_2$ observations of class $2$ such that $N = N_1 + N_2$, we define the block-wise probability-data matrix having dimension $N \times 2$ as:
\begin{equation*}
    \mathbf{Z} = \begin{bmatrix}
    (1-\epsilon) \mathbf{1}_{N_1} & \epsilon \mathbf{1}_{N_1} \\
    \epsilon \mathbf{1}_{N_2} & (1-\epsilon) \mathbf{1}_{N_2} \\
    \end{bmatrix}
\end{equation*}
where $\mathbf{1}_{d}$ is a vector of $d$ ones, and $\epsilon < 1/2$. Mean centering $\mathbf{Z}$ by its mean $\bar{\mathbf{Z}}$ and applying the isometric-log ratio transformation $\text{ilr}(\cdot)$, the vector
\begin{equation}\label{numeric_y}
\mathbf{f}_0=\text{ilr}(\mathbf{Z} \ominus \bar{\mathbf{Z}})
\end{equation}
is obtained. The symbol $\ominus$ indicates the subtraction in the simplex $S^2 = \{[\mathbf{Z}^\top_n] \in \mathbb{R}^{2}: \mathbf{Z}_{ng} >0, \sum_{g = 1}^{2} \mathbf{Z}_{ng} = 1 \}$, where $\mathbf{Z}_{ng}$ is the generic elements in row $n$ and column $g$ of the probability-data matrix $\mathbf{Z}$. In other words $\mathbf{f}_0 = (\mathbf{Z} \ominus \bar{\mathbf{Z}}) \mathbf{H}^\top$ where $\mathbf{H}$ is a $G-1 \times G$ orthonormal matrix where rows are orthogonal to $\mathbf{1}_G$ vector of ones \cite{tsagris2011data}. The PLSc can be formulated as the regression problem
\begin{equation*}
    \mathbf{f}_0 = \mathbf{X} \mathbf{B} + \mathbf{F}.
\end{equation*}

The matrix of the estimated regression coefficients $\hat{\mathbf{B}}$ is obtained using the PLS2 algorithm explained in Subsection \ref{pls} considering $\mathbf{f}_0$ defined in Equation \eqref{numeric_y} instead of $\mathbf{Y}$. The $2$ dimensional probability-data vector for a given observation $\mathbf{x}_n \in \mathbb{R}^{P \times 1}$ with $n \in \{1, \dots, N\}$ is calculated by
\begin{equation*}
    [\hat{\mathbf{Z}}^\top_n] = \text{ilr}^{-1} (\mathbf{x}_n^\top \hat{\mathbf{B}}) \oplus [\bar{\mathbf{Z}}^\top_n]
\end{equation*}
where $\oplus$ stands for the addition in the simplex $\mathcal{S}^2$. Finally, the estimated class membership $\hat{g}_n$ for a given observation $n \in \{1, \dots, N\}$ is the arguments of the maxima of $[\hat{\mathbf{Z}}^\top_n]$, i.e., 
\begin{equation*}
   \hat{g}_n =  \argmax_{g} \lim_{\epsilon \to 0^+} [\hat{\mathbf{Z}}^\top_n]
\end{equation*}
which is independent of the value of $\epsilon$ \cite{stocchero2021pls}.

The model can be post-transformed by applying the same procedure presented for the PLS2 algorithm using $\mathbf{f}_0$ instead of $\mathbf{Y}$. Specifically, post-transformation returns a single vector of predictive scores in the case of a 2-class problem, independently of the transformation used to map the class into the Euclidean space. As a general result, post-transforming the PLSc model leads to $G-1$ predictive score components for a given $G$-class problem.

\section{Statistical tests for PLS-based methods}\label{mcc}

For the sake of simplicity, we consider here again a 2-class problem, but the methodology can be extended to more complex problems. Let denote with $\mathbf{X}_g$ the matrix of the covariates regarding the $g \in \{1,2\}$ class, and with $\mathcal{F}$ and $\mathcal{G}$ the distributions of $\mathbf{X}_1$ and $\mathbf{X}_2$, respectively. Within the PLSc framework, the null hypothesis $\mathcal{H}_0: \mathcal{F} = \mathcal{G}$ is tested to evaluate if the covariates are similarly distributed between the two classes. In principle, various statistical tests can be used to test $\mathcal{H}_0: \mathcal{F} = \mathcal{G}$. Here, we propose three different statistical tests that are then analyzed in terms of power in Section \ref{applications}. 

The first is based on the MCC, equivalent to the normalized Pearson $\chi^2$ statistic, calculated considering the contingency table obtained with the real class, a common choice in PLS literature. Due to the discretization nature of this statistical test, the MCC-based test suffers from low power, particularly if the statistical test is calculated by cross-validation in the case of small sample size. Following the suggestions of Rosenblatt and coauthors \cite{rosenblatt2021better}, we consider the MCC-based test calculated considering the full data (i.e., without cross-validation with or without replacement).

The second statistical test proposed here is based on the predictive score vector $\mathbf{T}_P$. Under $\mathcal{H}_0: \mathcal{F} = \mathcal{G}$, we have $\mathcal{T}_{1P} = \mathcal{T}_{2P}$ where $\mathcal{T}_{gP} \in \mathbb{R}^{N_g \times 1}$ is the distribution of the predictive scores considering the class $g \in \{1,2\}$. The statistical test is defined as a two-sample t-test considering as samples the predictive scores for each class $g$ as samples. The statistic is, in principle, more powerful than the one based on the MCC because it overpasses the discretization problem, as it will be seen in the simulation analysis presented in Section \ref{simulated data}.

The third one is the squared Pearson correlation coefficient $R^2$ between the observed dependent variable $\mathbf{f}_0=\text{ilr}(\mathbf{Z} \ominus \bar{\mathbf{Z}})$ defined in Equation \eqref{numeric_y} and the estimated one $\mathbf{X} \hat{\mathbf{B}}$. As per the previous statistics, $R^2$ is, in principle, more powerful than MCC overpassing the discretization problem, even if both $R^2$ and MCC are based on the estimated matrix of the regression coefficients.

Let denote with $\mathcal{S}$ one of the three statistical tests proposed above. We rely on its permutation distribution to compute the corresponding $p$-values. Let define with $\mathcal{P}$ the set of all possible permutation matrices; we randomly select $J$ permutation matrices $\mathbf{P}_j \in \mathcal{P}$ where $1 \le j \le J\le |\mathcal{P}|$. Since under $\mathcal{H}_0: \mathcal{F} = \mathcal{G}$ the observations are exchangeable, we can randomly permute $J$ times the class labels to compute the null distribution of $\mathcal{S}$, i.e., we consider the transformation $\mathbf{P}_j \mathbf{Y}$ \cite{commenges2003transformations}. We fix as first transformation $\mathbf{P}_1$ the identity one to get exact $\alpha$ control \cite{pesarin2010permutation,hemerik2018exact}, i.e., $\mathcal{S}_1$ is the observed statistical test. Let consider the statistical test $\mathcal{S}$ computed under transformation $j$ of the data as $\mathcal{S}_j$ with $j \in \{1, \dots, J\}$, the $p$-value is simply calculated as
\begin{equation}\label{pvalue}
    p = \dfrac{\sum_{j = 1}^{J} \mathbb{I}_{\mathcal{S}_j \ge \mathcal{S}_1}} {J}
\end{equation}
considering a right-tailed rejection region. If the $p$-value is less than the given significance level $\alpha$, we declare $\mathcal{F} \ne \mathcal{G}$. 

\section{Power analysis}\label{power}

Since PLS-based methods are not based on statistical distributions and are not likelihood estimation-based methods, strategies for power analysis should be based on numerical simulation. We propose here an approach to simulate data under the alternative hypothesis (Subsection \ref{model_h1}) and the full procedure to estimate power and sample size (Subsection \ref{sample}), considering the statistical tests presented in Section \ref{mcc}.

\subsection{Simulate data under the alternative hypothesis}\label{model_h1}

The PLS model of the pilot data is used to simulate new datasets with a given sample size $\tilde{N}$, which are in turn used to calculate the power of the statistical test $\mathcal{S}$ defined in Section \ref{mcc}. The effect size is assumed to be the same captured by the PLS model of the pilot data and is not modified during the simulation. Moreover, since PLS techniques exploit the correlation structure underlying $\mathbf{X}$, the covariance structure of the pilot data should be preserved when new data are simulated under the alternative hypothesis.

In the following, we define the proper simulation model.

\begin{definition}\label{sim}
    Considering the PLS model of Definition \ref{bifact} and the pilot data $\mathbf{X}$ and $\mathbf{Y}$, the matrix $\tilde{\mathbf{X}}$ of the simulated data under the alternative hypothesis $\mathcal{H}_1$ is defined as:
\begin{equation*}
    \tilde{\mathbf{X}} = \tilde{\mathbf{T}} \mathbf{P}^\top + \tilde{\mathbf{E}}
\end{equation*}
where $\tilde{\mathbf{T}}$ is the score matrix under $\mathcal{H}_1$, $\mathbf{P} = \mathbf{X}^\top \mathbf{T}(\mathbf{T}^\top \mathbf{T})^{-1}$ is the loading matrix calculated by the PLS model of $\mathbf{X}$ and $\mathbf{Y}$, and $\tilde{\mathbf{E}}$ is the simulated residual matrix.
\end{definition}

It is worth noting that the number of observations of the simulated data can differ from that of the pilot data. The following theorem defines the constraints that $\tilde{\mathbf{T}}$ and $\tilde{\mathbf{E}}$ must satisfy in order to preserve the covariance structure of the pilot data $\mathbf{X}$.

\begin{theorem}\label{thm}
Considering the model of Definition \ref{sim} and that of Definition \ref{bifact}, under the assumptions $\Vert \mathbf{T} \mathbf{P}^\top \Vert_{F} \gg \Vert \mathbf{E} \Vert_{F}$ and $\Vert \tilde{\mathbf{T}} \mathbf{P}^\top \Vert_{F} \gg \Vert \tilde{\mathbf{E}} \Vert_{F}$, if $\tilde{\mathbf{T}}^\top \tilde{\mathbf{T}} = \mathbf{T}^\top \mathbf{T}$ and $\tilde{\mathbf{T}}^\top \tilde{\mathbf{E}} = \mathbf{0}$ then $\mathbf{X}^\top \mathbf{X} \approx \tilde{\mathbf{X}}^\top \tilde{\mathbf{X}}$.
\end{theorem}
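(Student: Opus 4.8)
The plan is to expand both Gram matrices $\mathbf{X}^\top\mathbf{X}$ and $\tilde{\mathbf{X}}^\top\tilde{\mathbf{X}}$ as quadratic forms in their score and residual blocks, eliminate the cross terms, identify the structured parts using the two equality hypotheses, and finally control the surviving residual terms with the two order-of-magnitude assumptions. First I would substitute $\mathbf{X} = \mathbf{T}\mathbf{P}^\top + \mathbf{E}$ from Definition \ref{bifact} and write
\begin{equation*}
\mathbf{X}^\top\mathbf{X} = \mathbf{P}\mathbf{T}^\top\mathbf{T}\mathbf{P}^\top + \mathbf{P}\mathbf{T}^\top\mathbf{E} + \mathbf{E}^\top\mathbf{T}\mathbf{P}^\top + \mathbf{E}^\top\mathbf{E}.
\end{equation*}
The key observation is that the pilot cross terms vanish automatically, with no extra hypothesis required: since $\mathbf{P} = \mathbf{X}^\top \mathbf{T}(\mathbf{T}^\top \mathbf{T})^{-1}$ one has $\mathbf{T}^\top\mathbf{T}\mathbf{P}^\top = \mathbf{T}^\top\mathbf{X}$, whence $\mathbf{T}^\top\mathbf{E} = \mathbf{T}^\top\mathbf{X} - \mathbf{T}^\top\mathbf{T}\mathbf{P}^\top = \mathbf{0}$, so $\mathbf{X}^\top\mathbf{X} = \mathbf{P}\mathbf{T}^\top\mathbf{T}\mathbf{P}^\top + \mathbf{E}^\top\mathbf{E}$. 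Repeating the expansion for $\tilde{\mathbf{X}} = \tilde{\mathbf{T}}\mathbf{P}^\top + \tilde{\mathbf{E}}$ and now invoking the hypothesis $\tilde{\mathbf{T}}^\top\tilde{\mathbf{E}} = \mathbf{0}$ to annihilate its cross terms yields $\tilde{\mathbf{X}}^\top\tilde{\mathbf{X}} = \mathbf{P}\tilde{\mathbf{T}}^\top\tilde{\mathbf{T}}\mathbf{P}^\top + \tilde{\mathbf{E}}^\top\tilde{\mathbf{E}}$.

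Next I would apply the equality $\tilde{\mathbf{T}}^\top\tilde{\mathbf{T}} = \mathbf{T}^\top\mathbf{T}$, which identifies the two structured blocks, $\mathbf{P}\tilde{\mathbf{T}}^\top\tilde{\mathbf{T}}\mathbf{P}^\top = \mathbf{P}\mathbf{T}^\top\mathbf{T}\mathbf{P}^\top$. Subtracting the two expressions then collapses their difference to the residual Gram matrices alone, $\mathbf{X}^\top\mathbf{X} - \tilde{\mathbf{X}}^\top\tilde{\mathbf{X}} = \mathbf{E}^\top\mathbf{E} - \tilde{\mathbf{E}}^\top\tilde{\mathbf{E}}$, so everything reduces to showing that this residual difference is negligible relative to the common structured block $\mathbf{P}\mathbf{T}^\top\mathbf{T}\mathbf{P}^\top$.

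Finally, to justify the symbol $\approx$, I would pass to Frobenius norms. By submultiplicativity, $\Vert\mathbf{E}^\top\mathbf{E}\Vert_{F} \le \Vert\mathbf{E}\Vert_{F}^2$ and likewise $\Vert\tilde{\mathbf{E}}^\top\tilde{\mathbf{E}}\Vert_{F} \le \Vert\tilde{\mathbf{E}}\Vert_{F}^2$, so $\Vert\mathbf{E}^\top\mathbf{E} - \tilde{\mathbf{E}}^\top\tilde{\mathbf{E}}\Vert_{F} \le \Vert\mathbf{E}\Vert_{F}^2 + \Vert\tilde{\mathbf{E}}\Vert_{F}^2$, whereas the retained block equals $(\mathbf{T}\mathbf{P}^\top)^\top(\mathbf{T}\mathbf{P}^\top)$ and is therefore of the order of $\Vert\mathbf{T}\mathbf{P}^\top\Vert_{F}^2 = \Vert\tilde{\mathbf{T}}\mathbf{P}^\top\Vert_{F}^2$. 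The two assumptions $\Vert\mathbf{T}\mathbf{P}^\top\Vert_{F} \gg \Vert\mathbf{E}\Vert_{F}$ and $\Vert\tilde{\mathbf{T}}\mathbf{P}^\top\Vert_{F} \gg \Vert\tilde{\mathbf{E}}\Vert_{F}$ then force the residual difference to be negligible in relative norm, giving $\mathbf{X}^\top\mathbf{X} \approx \tilde{\mathbf{X}}^\top\tilde{\mathbf{X}}$. I expect the main obstacle to be one of formalization rather than algebra: the statement is phrased through the informal relations $\gg$ and $\approx$, so the delicate point is to fix a precise meaning for \emph{negligible} (for instance, smallness of the relative Frobenius norm of the difference) and to verify that $\Vert\mathbf{P}\mathbf{T}^\top\mathbf{T}\mathbf{P}^\top\Vert_{F}$ is genuinely of order $\Vert\mathbf{T}\mathbf{P}^\top\Vert_{F}^2$ and not artificially deflated. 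Since $\mathbf{P}\mathbf{T}^\top\mathbf{T}\mathbf{P}^\top = (\mathbf{T}\mathbf{P}^\top)^\top(\mathbf{T}\mathbf{P}^\top)$ is positive semidefinite with singular values the squares of those of $\mathbf{T}\mathbf{P}^\top$, the two norms agree up to a factor controlled by the rank $A$, so the comparison is legitimate whenever the number of retained components is moderate, which is exactly the regime of interest for PLS.
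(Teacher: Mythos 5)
Your proposal is correct and follows essentially the same route as the paper's proof: expand both Gram matrices, use orthogonality of scores and residuals to eliminate the cross terms, identify the structured blocks via $\tilde{\mathbf{T}}^\top\tilde{\mathbf{T}} = \mathbf{T}^\top\mathbf{T}$, and invoke the Frobenius-norm dominance assumptions to discard the residual contributions. Your two refinements --- deriving $\mathbf{T}^\top\mathbf{E} = \mathbf{0}$ directly from the definition of $\mathbf{P}$ rather than citing it as a PLS property, and making the meaning of $\approx$ precise via the relative Frobenius norm with the rank-$A$ comparison of $\Vert\mathbf{P}\mathbf{T}^\top\mathbf{T}\mathbf{P}^\top\Vert_F$ against $\Vert\mathbf{T}\mathbf{P}^\top\Vert_F^2$ --- are sound and slightly sharpen the paper's argument without changing its substance.
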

\begin{proof}
Considering the PLS model, since $\mathbf{P} \mathbf{T}^\top$ is orthogonal to $\hat{\mathbf{E}}$, the covariance matrix of $\mathbf{X}$ equals
\begin{equation*}
    \text{cov}(\mathbf{X}) \propto \mathbf{X}^\top \mathbf{X} = \mathbf{P} \mathbf{S} \mathbf{P}^\top + \mathbf{E}^\top \mathbf{E} 
\end{equation*}
where $\mathbf{S} = \mathbf{T}^\top \mathbf{T}$, that is approximately $\mathbf{P} \mathbf{S} \mathbf{P}^\top$ when
\begin{equation}\label{eq:cond}
    \Vert \mathbf{T} \mathbf{P}^\top \Vert_{F} \gg \Vert \mathbf{E} \Vert_{F}.
\end{equation}
If some data structures are still present in the residual matrix, condition \eqref{eq:cond} may not be satisfied. In this case, the residual matrix can be modeled by PCA, and the obtained scores and loadings can be included in the matrix factorization of $\mathbf{X}$ generated by PLS.

Analogously, considering Definition \ref{sim}, one has
\begin{equation*}
    \text{cov}(\tilde{\mathbf{X}}) \propto \tilde{\mathbf{X}}^\top \tilde{\mathbf{X}} = \mathbf{P} \tilde{\mathbf{T}}^\top \tilde{\mathbf{T}} \mathbf{P}^\top + \tilde{\mathbf{E}}^\top \tilde{\mathbf{E}} + \tilde{\mathbf{E}}^\top \tilde{\mathbf{T}} \mathbf{P}^\top + \mathbf{P} \tilde{\mathbf{T}}^\top \tilde{\mathbf{E}}. 
\end{equation*}
If $\tilde{\mathbf{T}}^\top \tilde{\mathbf{T}} = \mathbf{S}$ and $\tilde{\mathbf{T}}^\top \tilde{\mathbf{E}} = \mathbf{0}$, under the condition 
\begin{equation*}
    \Vert \tilde{\mathbf{T}} \mathbf{P}^\top \Vert_{F} \gg \Vert \tilde{\mathbf{E}} \Vert_{F}
\end{equation*}
the simulated data $\tilde{\mathbf{X}}$ and the pilot data $\mathbf{X}$ show the same covariance structure.
\end{proof}

To guarantee that $\tilde{\mathbf{T}}^\top \tilde{\mathbf{T}} = \mathbf{T}^\top \mathbf{T}$ and $\tilde{\mathbf{T}}^\top \tilde{\mathbf{E}} = \mathbf{0}$, the following procedure is here proposed to simulate a set of $\tilde{N}$ observations. After to have estimated the multivariate distribution of the PLS-scores in $\mathbf{T} \in \mathbb{R}^{N \times A}$ (including also the scores of the PCA model of the residuals if needed) by kernel density estimation-based approaches \cite{sheather2004density}, the elements of the matrix $\tilde{\tilde{\mathbf{T}}} \in \mathbb{R}^{\tilde{N} \times A}$ are calculated sampling that distribution. In general, the scores in $\tilde{\tilde{\mathbf{T}}}$ are not orthogonal, and its covariance structure is different from that of $\mathbf{T}$. As a consequence, a suitable orthogonalization procedure must be applied. Considering the SVD $\tilde{\tilde{\mathbf{T}}} (\mathbf{T}^\top \mathbf{T}) = \mathbf{U} \mathbf{D} \mathbf{V}^\top$, the matrix $\tilde{\mathbf{T}} \in \mathbb{R}^{\tilde{N} \times A}$ is calculated by
\begin{equation*}
\tilde{\mathbf{T}} = \mathbf{U} \mathbf{V}^\top (\mathbf{T}^\top \mathbf{T})^{1/2}.
\end{equation*}

Thus, the residual matrix $\tilde{\mathbf{E}}$ is calculated as follows. The rows of the residual matrix of the PLS model $\hat{\mathbf{E}}$ (or those of the residual matrix after PCA modeling of the PLS-residual matrix if needed) are sampled with replacement $\tilde{N}$ times to obtain the rows of the new matrix $\tilde{\tilde{\mathbf{E}}}$ that is made orthogonal to $\tilde{\mathbf{T}}$ by projection as
\begin{equation*}
\tilde{\mathbf{E}} = (\mathbf{I}_{n} - \tilde{\mathbf{T}}(\tilde{\mathbf{T}}^\top \tilde{\mathbf{T}})^{-1} \tilde{\mathbf{T}}^\top) \tilde{\tilde{\mathbf{E}}}.
\end{equation*}
Once $\Vert \tilde{\mathbf{T}} \mathbf{P}^\top \Vert_{F} \gg \Vert \tilde{\mathbf{E}} \Vert_{F}$ is numerically tested, it can be proved that the proposed procedure leads to scores and residuals that satisfy Theorem \ref{thm}.

The procedure here introduced is general and can be used both for regression and for classification problems.

The estimation of the dependent variables in $\Tilde{\mathbf{Y}}$ associated with the $\Tilde{N}$ observations and covariates $\Tilde{\mathbf{X}}$ depends in general on the support $\mathcal{Y}$. In the case of $\mathcal{Y} = \{1, \dots, G\}$ with $G>2$, the class of the new observations can be assessed on the basis of the distributions in the score space of the pilot data, partitioning that space by class. For instance, in the simple case of $\mathcal{Y} = \{1,2\}$, new observations for a given class are simulated by sampling the score distribution of the observations of that class for the pilot data. In the case of $\mathcal{Y} = \mathbb{R}$, the dependent variable may be estimated using the PLS model of the pilot data to predict the new simulated observations, adding an error term calculated sampling the distribution of the error term of the pilot data, but this case is out of the aim of the present study.


\subsection{Power and sample size calculation}\label{sample}

Given the procedure that allows the simulation of new data under the alternative hypothesis (Section \ref{model_h1}), the statistical test $\mathcal{S}$ introduced in Section \ref{mcc}, and assuming a significance level $\alpha$ and a sample size $\Tilde{N}$, the power is estimated applying the pseudocode defined in Algorithm \ref{alg:power1}.

\begin{algorithm}[H]
\caption{The pseudocode shows the procedure to estimate the power of a PLS model with $A$ score components considering a dataset with $\Tilde{N}$ observations given the significance level $\alpha$, the number of simulations $I$ used in MC simulation and the number of permutations $J$ used to estimate the $p$-value of the statistical test $\mathcal{S}$ defined in Section \ref{mcc}} \label{alg:power1}
\begin{algorithmic}[1]
\Require $\mathbf{X}$, $\mathbf{Y}$, $A$, $\Tilde{N}$, $\mathcal{S}$, $\alpha$, $I$, $J$ \Comment{$\mathbf{X}$, $\mathbf{Y}$ are the pilot data.}
\Ensure $\text{power}$
\State $\text{power} \leftarrow 0$
\For{$i$ in $1,\dots, I$}
\State $\tilde{\mathbf{X}}$, $\tilde{\mathbf{Y}}$ $\leftarrow$ simulate$(\mathbf{X}, \mathbf{Y}, \Tilde{N})$ \Comment{see Section \ref{model_h1} }
\State out $\leftarrow$ PLS$(\tilde{\mathbf{X}}, \tilde{\mathbf{Y}}, A)$ \Comment{compute the PLS model} \label{plsstep}
\State Compute $\mathcal{S}_1, \dots, \mathcal{S}_J$ \Comment{null distribution of $\mathcal{S}$ using results from step \ref{plsstep}}
\State $p = \frac{|\mathcal{S}_j \ge \mathcal{S}|} {J}$ \Comment{compute $p$-value} \label{line:pv}
\If{$Ap\le \alpha$}
\State $\text{power} \leftarrow \text{power} + 1/I$ \Comment{compute power}
\EndIf
\EndFor
\end{algorithmic}
\end{algorithm}

The procedure is general and can be applied both to classification and to regression problems once suitable statistical tests are introduced.

It is worth noting that considering a PLS model with more than $1$ score component, the permutation-based $p$-values (described in Equation \eqref{pvalue} and calculated in row \ref{line:pv} of Algorithm \ref{alg:power1}) must be corrected for multiplicity to control the family-wise error rate (FWER) \cite{goeman2014multiple}. Indeed, considering $A$ score components and the Bonferroni method, the adjusted p-value is $\tilde{p}_A = A p_A$ where $p_A$ is the $p$-value related to $\mathcal{H}_0: \mathcal{F}=\mathcal{G}$ when $A$ score components are considered.

Given the procedure for power calculation defined in Algorithm \ref{alg:power1}, the sample size estimation can be performed following the procedure described in the pseudocode of Algorithm \ref{alg:power2}.

\begin{algorithm}[H]
\caption{The pseudocode shows the procedure to estimate the optimal sample size $\hat{N}$ for a PLS model with $A$ score components given the significance level $\alpha$ and power level $1-\beta$; $I$ is the number of simulations used in power calculation and $J$ the number of permutations used to estimate the $p$-value of the statistical test $\mathcal{S}$ defined in Section \ref{mcc}. The algorithm takes as initial candidate value $N_{\min}$.}\label{alg:power2}
\begin{algorithmic}[1]
\Require $\mathbf{X}$, $\mathbf{Y}$, $A$, $N_{\min}$, $\mathcal{S}$, $\alpha$, $\beta$, $I$, $J$ \Comment{$\mathbf{X}$, $\mathbf{Y}$ are the pilot data.}
\Ensure $\hat{N}$
\State $n\leftarrow N_{\min}$
\State$\text{power}(n)\leftarrow \text{calculate power} (\mathbf{X},\mathbf{Y},$A$,n,\mathcal{S},\alpha,I,J)$ \Comment{use Algorithm \ref{alg:power1}}
\While{$\text{power}(n) \ge 1-\beta$}
\State $n \leftarrow n+1$
\State$\text{power}(n)\leftarrow \text{calculate power} (\mathbf{X},\mathbf{Y},$A$,n,\mathcal{S},\alpha,I,J)$ \Comment{use Algorithm \ref{alg:power1}}
\EndWhile
\State $\hat{N} \leftarrow n$
\end{algorithmic}
\end{algorithm}


\section{Applications to data}\label{applications}
Simulated data and a real dataset are here investigated. Calculations were performed using a scientific computing cluster with a processor having 20 CPU and 200 GB of RAM and the R package called \texttt{powerPLS} available on GitHub (\url{https://github.com/angeella/powerPLS}).

\subsection{Simulated data}\label{simulated data}

The main advantage of considering simulated data is that their structure is a-priori known. This study considers a 2-class problem with a pilot dataset composed of $N_g = 5 \,\, \forall g \in \{1,2\}$ observations per class and $P = 30$ covariates. The dataset has been built imposing the following data structure: $5$ covariates closely related to the class membership and $25$ noisy covariates. Specifically, the matrix $\mathbf{X}\in \mathbb{R}^{10 \times 30}$ of the pilot data has been simulated as the block matrix
\begin{equation}\label{pilot}
    \mathbf{X} = [\mathbf{T}_{\text{pilot}} \mathbf{P}_{\text{pilot}}^\top |\mathbf{X}_{\bar{R}}]
\end{equation}
where $[\mathbf{T}_{\text{pilot}} \mathbf{P}_{\text{pilot}}^\top]\in \mathbb{R}^{10 \times 5}$ is associated to the class and $\mathbf{X}_{\bar{R}}\in \mathbb{R}^{10 \times 25}$ contains random noise. The matrix $\mathbf{T}_{\text{pilot}} \in \mathbb{R}^{10 \times A_{\text{pilot}}}$ is the PCA-score matrix of $C = [C_1^\top | C_2^\top]$ where $C_1 \sim \mathcal{MN}(0,\mathbf{I}_{A_{\text{pilot}}})$ and $C_2 \sim \mathcal{MN}(\mu,\mathbf{I}_{A_{\text{pilot}}})$ with $\mu \in \{2,5\}$, $\mathbf{P}_{\text{pilot}} \in \mathbb{R}^{5 \times A_{\text{pilot}}}$ is the PCA-loading matrix of a $(A_{\text{pilot}} \times 5)$ matrix sampled from a $\mathcal{U}(0,1)$, and $\mathbf{X}_{\bar{R}}$ is sampled from a $\mathcal{MN}(\mathbf{0},\mathbf{I}_{25})$. The parameter $\mu$, which defines the distance between the centers of the distributions of the two classes, is used to set the effect. Indeed, large values of $\mu$ can be interpreted as large effects. The dimension $A_{\text{pilot}}$ has been set to $2$ (the results for $A_{\text{pilot}}=3$ are reported in Appendix \ref{appendix:AppendixC}).  

Figure \ref{fig:mcc} shows the mean of the estimated power across $30$ simulations of the pilot data following the procedure defined in Equation \eqref{pilot}. For each of the $30$ simulations, the power has been estimated using the statistical tests specified in Section \ref{mcc}, $J = 200$ random permutations for the $p$-value estimation, and $I = 100$ simulations for the data under the alternative hypothesis. The number of observations per class was $N_1 = N_2 \in \{5, 10, 15, 20, 25, 30\}$, and the number of score components $A \in \{1,2,3,4\}$. The upper panel of Figure \ref{fig:mcc} refers to the results using MCC as statistical test, the center panel to the results obtained considering the squared Pearson correlation coefficient $R^2$ between the dependent variable mapped into the Euclidean space and the estimated one, and the bottom panel to the results when the statistical test based on $\mathbf{T}_p$ is used in the power analysis estimation process.

\begin{figure}
    \centering
    \includegraphics[width = 1\textwidth]{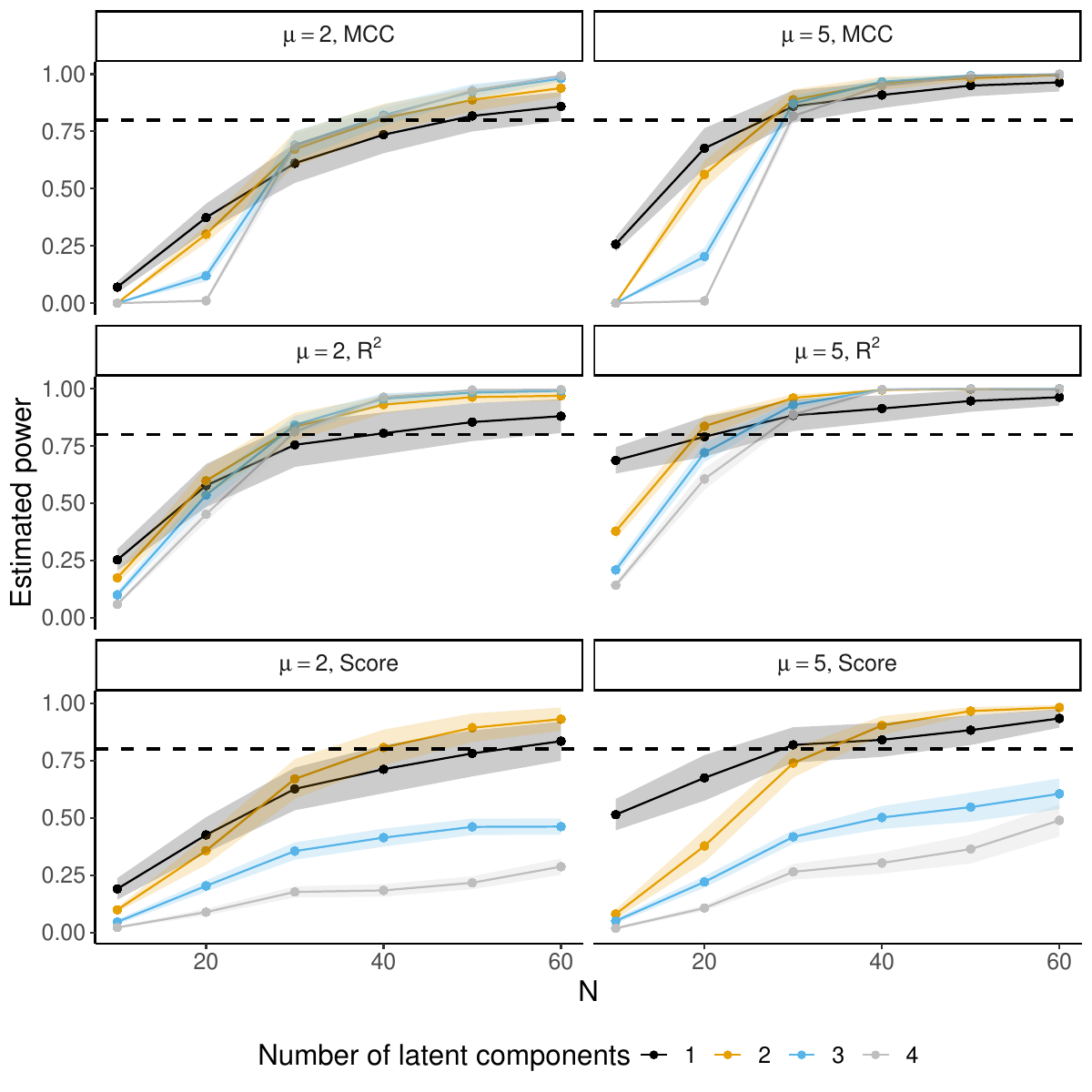}
    \caption{Estimated power across different sample size $N \in \{10,20,30,40,50,60\}$ using the statistical tests $\mathcal{S}$ introduced in Section \ref{mcc}. The pilot data have been simulated with $A_{\text{pilot}}=2$, whereas the power has been estimated considering a PLSc model with $A \in \{1,2,3,4\}$ score components. A different color has been used to represent the power curves for different $A$. The colored shadow areas represent the corresponding confidence intervals at level $0.95$. $100$ MC simulations and $200$ permutations have been considered for each analysis.}
    \label{fig:mcc}
\end{figure}

Figure \ref{fig:mcc} shows that power increases with increasing sample size, as expected. Moreover, in the case of small sample sizes, the MCC-based statistical test has lower power than the other two tests analyzed. For example, fixing $N = 10$ and considering the largest effect size (i.e., $\mu = 5$) and one score component, the mean across $30$ simulations of the estimated power equals $0.257$ if MCC is considered, whereas it equals $0.514$, $0.687$ if the score-based and the $R^2$ are employed, respectively. This result confirms the intuition of Rosenblatt and coauthors \cite{rosenblatt2021better}, where the discretization nature of the MCC-based test leads to a loss of power compared to test statistics that exploit the continuous aspect of the classification model under analysis. It is well known that using permutation theory to compute the null distribution of a discrete statistical test is conservative \cite{hemerik2018exact,rosenblatt2021better}. However, the conservativness generally decreases with sample sizes \cite{kim2016classification} as we can note in Figure \ref{fig:mcc}.

The behavior of the power curve is not only dependent on the statistical test but also heavily dependent on the number of score components of the PLSc model. For MCC and $R^2$, the curves are similar in the region of large sample size, independently of the number of score components, but show differences in the area of small sample size. Indeed, for small sample sizes, PLSc may overfit the data modeling noise with the effect of decreasing power. This effect decreases with the increase in the sample size. For the score-based statistic, the curves are different, and the behavior is more complex.
The same trends are observed in Figure \ref{fig:mcc_A3} of Appendix \ref{appendix:AppendixC} obtained considering pilot data with $A_{\text{pilot}}=3$.

Independently of the statistical test, sample size estimation resulted to be heavily dependent on the number of score components, which must be accurately determined if one wants to obtain a reliable sample size estimation.

\subsection{Real data}\label{real data}

The dataset has been extracted from the data published in \cite{locci20191} where $59$ post-morten aqueous humour samples were collected from closed and opened sheep eyes. Each sample was analyzed by 1H NMR spectroscopy, obtaining the quantification of $43$ metabolites. As a result, a pilot dataset composed of $59$ observations ($29$ from opened eyes and $30$ from closed ones) and $43$ covariates was obtained. More details about sample collection, experimental procedure, and data preprocessing can be found in \cite{locci20191}. Data were autoscaled prior to performing data analysis. In power analysis, the residuals of the PLSc model were submitted to PCA to obtain a score structure able to explain at least the $80\%$ of the total variance of the pilot data.

The two groups of samples, corresponding to opened and closed eyes, were investigated by PLSc. Considering $A \in \{1,2,3,4\}$ score components, the models with the test statistics reported in Table \ref{table1} were obtained. The adjusted p-values were less than $0.05$ for all the statistics. As expected, all the statistics increased with the increase in the number of components.

The power was greater than 0.90 for each model, independently of the statistical test used. During the data simulation, the level of similarity between the pilot data and the data simulated under the alternative hypothesis was assessed by computing two measures of association between matrices, i.e., the RV coefficient \cite{escoufier1973traitement} and the Procrustes one \cite{gower1971statistical}. Both indices take values in $[0,1]$, where $0$ stands for the absence of association (i.e., orthogonal information) while $1$ equals complete similarity between the two data matrices. Considering the whole set of simulated data, the RV index was at least equal to $0.923$, while the Procrustes one showed a minimum equal to $0.919$. These results proved that the covariance structure was preserved during power estimation.

Calculating the power curves for ${6, 12, 20, 30, 42}$ observations per class, considering $J = 200$ random permutations for the p-value estimation and $I = 100$ simulations for the data under the alternative hypothesis, we found that the power curves obtained considering $R^2$ showed greater power than those of the other two statistics. Specifically, considering, for instance, a PLSc model with two score components, a sample size of approximately 24 observations per class is required for MCC and score-based statistics to have a power of $0.80$, whereas a sample size of approximately 16 observations per class is requested for $R^2$.

\begin{table}

\caption{Real dataset: test statistics (as defined in Section \ref{mcc}) for the pilot data considering $A \in \{1,2,3,4\}$ number of score components.}\label{table1}
\centering
\begin{tabular}[t]{rrrr}
\toprule
$A$ & \textbf{MCC} & \textbf{Score} & $R^2$\\
\midrule
1 & 0.83 & 10.4 & 0.66\\
2 & 0.87 & 12.7 & 0.74\\
3 & 0.93 & 14.8 & 0.79\\
4 & 0.97 & 15.7 & 0.81\\
\bottomrule
\end{tabular}
\end{table}

\section{Concluding remarks}\label{discussion}
We have introduced an innovative procedure for conducting power analysis within the context of PLS-based methods.

The proposed approach leverages the score structure identified in the pilot data when simulating data under the alternative hypothesis to estimate power across varying sample sizes. It considers explicitly the data decomposition discovered by PLS and can be applied in principle both to regression and to classification problems. Following the strategy introduced in Section \ref{model_h1}, the correlation structure of the pilot data has been preserved during data simulation, as proved by investigating the real data set in Section \ref{real data}.

For the sake of simplicity, 2-class classification problems were investigated, testing the null hypothesis of no differences between classes. Specifically, we have introduced three permutation-based statistical tests to analyze the covariate distribution between the two classes. The approach overcomes cross-validation issues when analyzing pilot data with a small sample size (i.e., when less than 10-15 observations per class are available), even if for larger pilot data, cross-validation may be used to estimate the test statistics to use in statistical tests.

To evaluate the effectiveness of our proposed power analysis approach, we conducted simulations across various scenarios and analyzed a real data set.

In all cases, the power curve increased with the increase in the sample size, as expected. Interestingly, the power curve seems to depend heavily on the number of score components used in PLS. Consequently, estimating the correct number of scores to use in PLS modeling is fundamental to obtaining reliable power and sample size estimation. In principle, if statistical tests based on the estimated matrix of the regression coefficients are used, i.e., MCC and $R^2$, the greater the number of score components is, the greater the differences detected between classes are, i.e., MCC and $R^2$ increase, but then power may not increase. Indeed, over-fitting may be present when an excessive number of score components is used, increasing the p-values estimated under the null hypothesis and decreasing the power. Moreover, the correction for FWER may limit the effect of increasing the number of scores, reducing the significance level of the test. From the simulated data and the real dataset, the score-based statistical test seems to show a slightly different behavior with respect to MCC and $R^2$, probably because the differences between the centers of the classes and the dispersion around these centers in the predictive score space are differently weighted in the test statistic. Again, $R^2$ seems to be a better candidate than MCC since it increases the power of the test, at least in the case of small pilot data sets. However, $R^2$ may be misleading in the case of classification problems because small and large residuals in the calculation of the dependent variable may be associated with the same class, making $R^2$ an unreliable parameter to measure the goodness in classification, i.e., small $R^2$ may be associated to large MCC. Some type of regularization could be necessary to adapt $R^2$ to classification. Moreover, both MCC and $R^2$ can be used to study more general multi-class problems. Still, new score-based statistical tests must be introduced for a general $G$-class problem since $G-1$ predictive scores are calculated.

The present study must be considered a preliminary study since it does not address all the issues of power analysis, even if it draws a methodology towards a comprehensive approach. Moreover, it is worth noting that the same lines of thought presented for the 2-class classification problem can be adapted to deal with more complex problems.

The main limit of the study is that the effect size has not been considered as a parameter to be investigated in power analysis. Indeed, since the new data were simulated to preserve the correlation structure of the pilot data, the effect size was maintained unchanged during the power calculation. It is not trivial how to define and measure the effect size in PLS-based methods; a dedicated study will deal with this topic. However, the data decomposition in predictive and non-predictive parts generated by PLS also paves the way for the possibility of defining and modifying the effect size for a more general power analysis. Indeed, a natural approach may be changing the predictive score structure to increase or decrease the effect size, leaving the non-predictive part unchanged, but this will be discussed in a further study.

Another limitation is that the power in the estimation of the number of PLS-score components and that of the relevant features discovered by PLS were not considered here. Even if these two points will be discussed in the future, we want to disclose that the methodology proposed here can also be adapted to address these issues.


\section*{Acknowledgments}

Angela Andreella gratefully acknowledges funding from the grant BIRD2020/SCAR ASSEGNIBIRD2020\_01 of the University of Padova, Italy, and PON 2014-2020/DM 1062 of the Ca’ Foscari University of Venice, Italy. Some of the computational analyses done in this manuscript were carried out using the Ca' Foscari University of Venice multiprocessor cluster ``SCSCF: Ca' Foscari Scientific Computing System'', \url{https://www.unive.it/pag/30351/}.

\section*{Authors Contributions}
\textbf{AA}: Conceptualization, methodology, software, formal analysis, investigation, writing - original draft; \textbf{LF}: conceptualization, methodology, supervision, writing - review \& editing; \textbf{BS}:  conceptualization, methodology, supervision, writing - review \& editing, \textbf{MS}: Conceptualization, methodology, software, formal analysis, investigation, writing - original draft.

\section*{Declaration of Competing Interest}
The authors declare no competing interests.

\newpage

\appendix

\section{PLS algorithm}\label{appendix:AppendixA}
Several algorithms have been proposed for PLS regression. Here, we report the so-called ``eigenvalue'' PLS2 algorithm fixing the number of latent components equals $A$.
\begin{algorithm}[H]
\caption{``eigenvalue'' PLS2 algorithm}
\begin{algorithmic}[1]
\Require $\mathbf{X} \in \mathbb{R}^{N \times P}$; $\mathbf{Y} \in \mathbb{R}^{N \times K}$; $A$
\State $\hat{\mathbf{E}}_{0} = \mathbf{X}$
\State $\hat{\mathbf{F}}_{0} = \mathbf{Y}$
\For{$a$ in $1,\dots, A$}
\State $\hat{\mathbf{E}}_{a-1}^\top \hat{\mathbf{F}}_{a-1} \hat{\mathbf{F}}_{a-1}^\top \hat{\mathbf{E}}_{a-1} \mathbf{w}_a = \lambda_a \mathbf{w}_a$ \label{lst:pls2} \Comment{Estimate $\mathbf{w}_a$}
\State $\mathbf{t}_a = \hat{\mathbf{E}}_{a-1} \mathbf{w}_a$
\State $\mathbf{Q}_{\mathbf{t}_a} = \mathbf{I}_{N} - \mathbf{t}_a (\mathbf{t}_a^\top \mathbf{t}_a)^{-1} \mathbf{t}_a^\top$ 
\State $\hat{\mathbf{E}}_a = \mathbf{Q}_{\mathbf{t}_a} \hat{\mathbf{E}}_{a-1}$  \Comment{$\mathbf{X}$-deflation step }
\State $\hat{\mathbf{F}}_a = \mathbf{Q}_{\mathbf{t}_a} \hat{\mathbf{F}}_{a-1}$ \Comment{$\mathbf{Y}$-deflation step }
\EndFor
\end{algorithmic}\label{algo:pls2}
\end{algorithm}
The matrices $\hat{\mathbf{E}}_a$ and $\hat{\mathbf{F}}_a$ are called the residual matrix of the $\mathbf{X}$- and $\mathbf{Y}$-block, respectively, the vectors $\mathbf{w}_a$ and $\mathbf{t}_a$ are called weight vector and score vector, respectively, and $\mathbf{Q}_{\mathbf{t}_a}$ is an orthogonal projection matrix that projects a given vector into the space orthogonal to $\mathbf{t}_a$. We denote $\hat{\mathbf{E}} \coloneqq \hat{\mathbf{E}}_A$ and $\hat{\mathbf{F}} \coloneqq \hat{\mathbf{F}}_A$.
When the calculation of the weight vector in step \ref{lst:pls2} of Algorithm \ref{algo:pls2} is replaced by a given vector $\mathbf{w}_a$ defined as input of the algorithm, the algorithm becomes the ``Iterative Deflation Algorithm'' (IDA), which is a general algorithm able to solve the least squares problem
\begin{equation*}
        \mathbf{\hat{B}}_{\text{LS}}=\arg\min_{\mathbf{B}}|| \mathbf{Y} - \mathbf{X} \mathbf{B}||_{F}^2 
\end{equation*}
for a non-trivial choice of the vectors $\mathbf{w}_a$ \cite{stocchero2019iterative}. The main properties of the ``eigenvalue'' PLS2 algorithm have been extensively discussed in the past literature. Readers can refer to \cite{Hoskuldsson1988} and to \cite{stocchero2019iterative}.

\section{Post-transformation of PLS2}\label{appendix:AppendixB}
Post-transformation of PLS2 has been introduced in \cite{stocchero2016post} to separate the structured data variation discovered by PLS2 into the predictive and non-predictive parts. From a geometrical point of view, post-transformation linearly transforms the score space of the PLS2 model spanned by $\mathbf{T}$ to obtain two new sets of scores: the predictive scores $\mathbf{T}_{P}$ able to explain the dependent matrix $\mathbf{Y}$, and the non-predictive scores $\mathbf{T}_{o}$ that are orthogonal to $\mathbf{Y}$, i.e.
\begin{equation*}
    [\mathbf{T}_{P} \mathbf{T}_{o}] = \mathbf{T} \Tilde{\mathbf{G}}
\end{equation*}
where the score matrices $\mathbf{T}_{P} = [\mathbf{t}_{P}]$ and $\mathbf{T}_{o} = [\mathbf{t}_{o}]$ have been introduced, and the matrix $\Tilde{\mathbf{G}}$ is a suitable non-singular matrix.

Post-transformation is performed using the columns of $\mathbf{W}\mathbf{G}$, where $\mathbf{G}$ is a suitable orthogonal matrix, as weight vectors within the IDA instead of the weight vectors $\mathbf{w}_a$ calculated by PLS2. The algorithm to calculate the matrix $\mathbf{G}$ required to post-transform the PLS2 model is reported in Algorithm \ref{algo:pt}.

\begin{algorithm}[H]
\caption{Algorithm to calculate the matrix $G$}\label{algo:pt}
\begin{algorithmic}[1]
\Require $\mathbf{X} \in \mathbb{R}^{N \times P}$; $\mathbf{Y} \in \mathbb{R}^{N \times K}$; $\mathbf{W} \in \mathbb{R}^{P \times A}$
\Ensure $\mathbf{G} \in \mathbb{R}^{A \times A}$
\State $\mathbf{Y}^\top \mathbf{X} \mathbf{W} = \mathbf{U} \mathbf{S} \mathbf{V}^\top$ \Comment{Singular Value Decomposition}
\State $(\mathbf{I}_{A} - \mathbf{V} \mathbf{V}^\top) \mathbf{g}_{o_i}=\lambda_{o_i} \mathbf{g}_{o_i}$ \Comment{$M$ positive eigenvalues}
\State $\mathbf{G}_{o}=[\mathbf{g}_{o_1} \cdots \mathbf{g}_{o_M}]$
\State $(\mathbf{I}_{A} - \mathbf{G}_{o} \mathbf{G}_{o}^\top) \mathbf{g}_{P_i}=\lambda_{P_i} \mathbf{g}_{P_i}$ \Comment{${A-M}$ positive eigenvalues}
\State $\mathbf{G}=[\mathbf{G}_{o} \mathbf{g}_{P_1} \cdots \mathbf{g}_{P_{A-M}}]$
\end{algorithmic}
\end{algorithm}

Alternative algorithms for calculating $\mathbf{G}$ have been proposed \cite{stocchero2016post}. Moreover, since all the models based on IDA can be post-transformed \cite{stocchero2019iterative}, procedures of post-transformation can be, in principle, developed for most of the PLS methods because most of the PLS-based techniques are based on the IDA.

\section{Simulated data}\label{appendix:AppendixC}

Figure \ref{fig:mcc_A3} has been obtained considering the same simulation process presented in Section \ref{simulated data}, but setting the dimension $A_{\text{pilot}}=3$.

\begin{figure}
    \centering
    \includegraphics[width = 1\textwidth]{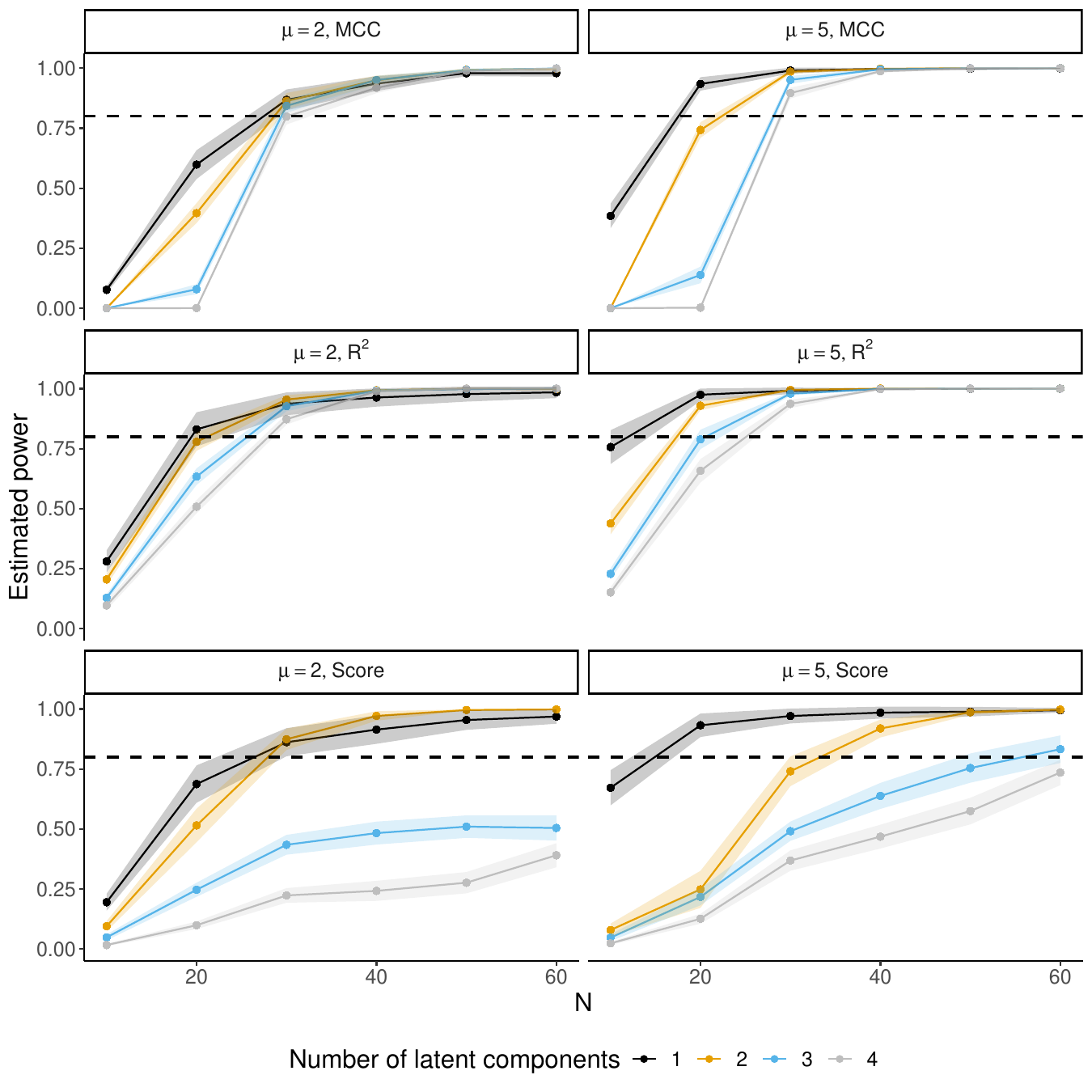}
    \caption{Estimated power across different sample size $N \in \{10,20,30,40,50,60\}$ using the statistical tests $\mathcal{S}$ introduced in Section \ref{mcc}. The pilot data have been simulated with $A_{\text{pilot}}=3$, whereas the power has been estimated considering a PLSc model with $A \in \{1,2,3,4\}$ score components. A different color has been used to represent the power curves for different $A$. The colored shadow areas represent the corresponding confidence intervals at level $0.95$. $100$ MC simulations and $200$ permutations have been considered for each analysis.}
    \label{fig:mcc_A3}
\end{figure}

\clearpage 

\bibliographystyle{apalike}
\bibliography{biblio}

\end{document}